\title{Tight Lower Bounds for Unequal Division}
\author{Andrew Lohr}
\newcommand{\caseTitle}[1]{{\bf #1}\\}
\newtheorem{buildup}{Lemma}
\newtheorem{construction}{Theorem}
\newtheorem{define}{Definition}
\newtheorem{corr}{Corollary}
\begin{document}
\maketitle
\section{Introduction}
Alice and Bob want to cut a cake; however, in contrast to the usual problems of fair division, they want to cut it unfairly. More precisely, they want to cut it in ratio $(a:b)$. (We can assume gcd(a,b)=1.) Let f(a,b) be the  number of cuts this will take (assuming both act in their own self interest).
It is known that $f(a,b) \le \lceil \lg(a+b)\rceil$. We show that (1) for all a,b, $f(a,b) \ge lg(lg(a+b))$ and (2) for an infinite number of (a,b), $f(a,b) \le 1+lg(lg(a+b)$.

The problem of discrete, unequal division already has a solution given in \cite{robertsonwebb}, which acheives a ratio $(a,b)$ in at most $\lceil \lg(a+b) \rceil$ steps. This procedure works to effectively halve the sum of ratio with each cut by having a peice cut off that is near half of the current sum of the ratio. Then the other person takes the piece if they value it more than the person who cut it. So that no matter how the  piece is allocated, that ratio falls by a factor of approximately two. 
The following tree shows the possible sequences of ratios to be considered when applying this procedure to the ratio $(9,8)$. Note, it uses 5 cuts in worst case.
\begin{scriptsize}
\begin{center}
{\synttree [(9,8) 
[(9,0)]
[(1,8)
[(1,4)[(1,2)[(0,1)][(1,1)[(1,0)][(0,1)]]][(1,1)[(1,0)][(0,1)]]]
[(1,5)[(3,1)[(1,1)[(1,0)][(0,1)]][(1,1)[(1,0)][(0,1)]]][(3,1)[.x same]]]]]}
\end{center}
\end{scriptsize}

\section{Definitions}
We call a ratio $(a,b)$ in lowest terms if $\gcd(a,b) = 1$.  Notice that for the purposes of allocation, $(a,b)$ is equivalent to $(b,a)$ we'll use them interchangably.
\begin{define}
A standard, discrete protocol is a protocol in which each person involved is able to either respond with what value they place on a particular piece, or cut off a piece of a given value.
\end{define}
\begin{define}
An $(a,b)$-division is a standard protocol involving Alice and Bob such that Alice receives at least $\frac{a}{a+b}$ and Bob receives at least $\frac{b}{a+b}$.
\end{define}
We will show later that a more restricted form of standard protocols is all that need be considered for $(a,b)$-divisions.
\begin{define}
Let $f(a,b)$ be the smallest number of worst-case cuts needed for an (a,b)-division.
\end{define}
\begin{define}
$ProcA(c)$ is the procedure in which one of the two people (selected arbitrarily) cuts off a piece that they value at $c$. Then, if the other person values that piece $>c$, they take it, otherwise the person who cut it off takes it.
\end{define}
\begin{define}
$ProcB(c)$ is the procedure in which the person who is owed less of the whole, cuts off a piece that they value at $c$. Then, the other person takes that peice if they value it more than $c$, otherwise they take the other piece, which they neccesarily value at $\ge 1-c$
\end{define}

\begin{define}
The binary operations on ratios $*_1$, $*_2$, and $*_3$ are:
\end{define}
\begin{center}
\begin{math}
\begin{array}{lcl} 
(a_1,b_1) *_1 (a_2,b_2) &=&((a_1 +b_1)  a_2, (a_2 + b_2)  b_1)\\
(a_1,b_1) *_2 (a_2,b_2) &=&(a_1  a_2,  a_1  a_2 + b_2  a_1 + a_2  b_1)\\
(a_1,b_1) *_3 (a_2,b_2) &=&(a_1  b_2 + b_1  a_2 + b_2  b_1, b_2  b_1)
\end{array}
\end{math}
\end{center}
Note, that if either ratio given as an argument is scaled, then that merely causes the result to be scaled by the same factor, so, these operators are independent of representatives of the ratios used.

\section{Examination of $(a,b)$-Divisions}
\subsection{A Motivating Example}
We are able to see, however, that this logarithmic bound is not always tight. At certain times it is possible to acheive a ratio in many fewer cuts than the previous method by selecting the cutoff value so that both possibilities for the remaining ratio to be divided is not in lowest term. So, when reduced to lowest terms, they are much smaller.

A simple example of this approach doing better than the cut-near-halves algorithm is given for the ratio $(9,8)$:
\begin{center}
{\synttree [(9,8) [(3,1)[(1,1)[(1,0)][(0,1)]][(1,1)[(1,0)][(0,1)]]][(2,1)[(1,1)[(1,0)][(0,1)]][(1,0)]]]}
\end{center}
We see that by having one person cut off $\frac{5}{17}$ initially. Depending on whether the other person thinks it is less than or greater than $\frac{5}{17}$, we get the subproblems $(9,3)$ and $(4,8)$ which are equivalent to $(3,1)$ and $(2,1)$ respectively. 

So, by selecing the cutoff carefully, we were able to acheive $(9,8)$ in only 3 cuts instead of 5 cuts. By computer search we also found many much larger ratios, for example, with six cuts, we can get a $(58470565,72019008)$-division  instead of the twenty eight cuts that cut near halves would require. We will be investigating a generalization of this type of selection of the amount to be cut off, and also give a lower bound on $f(a,b)$ in terms of $a+b$ (of course, assuming $(a,b)$ is in lowest terms).

\subsection{Foundation}
Although standard protocols allow for arbitrary sequences of evaluation of pieces and asking a person to cut off a piece of a given size, our situation can only benefit from a more restricted set of operations. Since we only have two people, the only evaluations it will be helpful to ask for are those from the person who didn't make the cut. Also, after a piece is cut, and evaluations are made, one of the pieces must be allocated to a person, and the procedure continues only on the other piece.

\begin{buildup}
Any optimal $(a,b)$-division can be rewritten as sequences of $ProcA(c)$ and $ProcB(c)$, each time picking some $c$
\end{buildup}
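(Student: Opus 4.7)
The plan is to take an arbitrary optimal protocol, viewed as a decision tree whose internal nodes are evaluations and cuts, and normalize it step by step into a protocol whose every cut is an instance of $ProcA(c)$ or $ProcB(c)$, without increasing the worst-case cut count on any root-to-leaf path. The normalization proceeds by addressing evaluations, cuts, and allocations in turn.

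First, I would prune redundant evaluations. Asking the cutter to evaluate a piece they just cut to their own value $c$ reveals no new information, and evaluations before the first cut are vacuous. Since the two children of a cut have values summing to the parent's value (for each person separately), one evaluation by the non-cutter determines both, so at most one fresh evaluation per cut is needed. After this pruning, every cut is immediately followed by a single evaluation from the non-cutter, yielding a real number $v$.

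Second, I would argue that each cut-evaluate pair can be followed immediately by the allocation of exactly one piece, with recursion on the other. Postponing an allocation is never beneficial: the total allocations must equal the whole cake, so deferring commitment increases bookkeeping without saving a cut. Any branch that leaves both pieces unallocated while further cuts are made on one of them can be reorganized by allocating the untouched piece first.

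The main obstacle is the third step: restricting the branching on the non-cutter's valuation $v$ to a single threshold at $c$, with the allocation rule that the non-cutter takes the cut piece iff $v > c$. A general standard protocol might dispatch on arbitrarily many values of $v$, but the key observation is that once the allocation is fixed, the residual subproblem is an $(a',b')$-division on the remaining piece, and the ratio $(a',b')$ depends on $v$ only through which of the two pieces was handed out. So at most two genuinely different subproblems can arise, and the whole dispatch on $v$ collapses onto $v > c$ versus $v \le c$. To pin down which side should take the cut piece, I would use an exchange argument: reversing the allocation hands one party a piece they value strictly less than the available alternative, which only tightens the remaining $(a',b')$-subdivision and so cannot decrease the number of required cuts. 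Once this normalization is complete, each step is an instance of $ProcA(c)$, and specializes to $ProcB(c)$ precisely when the cutter is the party owed less.
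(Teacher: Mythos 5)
Your first two normalization steps (reducing to a single post-cut evaluation by the non-cutter, and forcing an immediate allocation of one piece with recursion on the other) track the paper's informal preliminary discussion at about the same level of rigor, and the threshold-collapse observation is also in the right spirit. The genuine gap is in your third step, specifically in how the allocation on the decline branch is pinned down, and in the closing claim that every normalized step is an instance of $ProcA(c)$ which ``specializes to $ProcB(c)$'' when the cutter is the poorer party. That claim is false: the two procedures agree when the non-cutter values the cut piece above $c$, but they diverge when he does not --- in $ProcA(c)$ the \emph{cutter} takes the \emph{cut piece} and the protocol recurses on the complement, whereas in $ProcB(c)$ the \emph{non-cutter} takes the \emph{complement} and the protocol recurses on the cut piece. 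These yield different allocations and different residual ratios, so neither procedure subsumes the other, and your normalization as stated only ever produces the $ProcA$ rule.

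Which of the two rules is even feasible is forced by where $c$ sits relative to the target shares $\frac{a}{a+b}$ and $\frac{b}{a+b}$, and this case analysis is the actual content of the paper's argument. Suppose $a<b$, Alice cuts, and $\frac{a}{a+b}<c<\frac{b}{a+b}$. On the decline branch the worst case is that Bob values the cut piece at exactly $c$, so the complement is worth $1-c<\frac{b}{a+b}$ to him; handing the cut piece to Alice as $ProcA$ prescribes would then require Bob to extract more than the entire remainder, which is impossible, so only the $ProcB$ rule works. Conversely, if $c\le\frac{a}{a+b}$, the complement is worth at least $\frac{b}{a+b}$ to Alice, and a short feasibility check shows it cannot be the allocated piece, so only the $ProcA$ rule works. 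Your exchange argument only contemplates reversing the \emph{recipient} of the cut piece; it never considers allocating the \emph{other} piece, so it cannot detect this dichotomy. Relatedly, the residual ratio depends not just on ``which of the two pieces was handed out'' but also on who received it, giving up to four candidate subproblems per cut; it is precisely the position of $c$ relative to the two shares that eliminates all but two of them, and that step is missing from your proof.
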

Suppose that the ratio to be divided is $(a,b)$ wlog $a<b$ (if $a=b$, do cut-and-choose which is $ProcA(\frac{1}{2})$). So, the basic operation we are left with takes three forms, depending on what fraction $\frac{k}{d}$ we ask a person to make first. \\
\caseTitle{Case 1: $\frac{a}{a+b}<\frac{k}{d} <\frac{b}{a+b}$} Notice in this case that both pieces produced are $> \frac{a}{a+b}$ so, the piece to be allocated can't be to Alice. If Bob evaluates the piece cut off to be  $\ge \frac{k}{d}$ he takes that piece, otherwise he takes the other piece, which is necessarily $\ge \frac{d-k}{d}$. Then, they procede to divide the unallocated piece in the ratio either $(a*d,b*d-k*(a+b))$ or $(a*d,k*(a+b)-a*d)$ depending wether Bob took the piece Alice valued at $\frac{k}{d}$ or the piece she valed at $\frac{d-k}{d}$, respectively\\
\caseTitle{Case 2: $\frac{k}{d} \le \frac{a}{a+b}<\frac{b}{a+b}$} Notice that the piece that is not cut off is greater than either person's due share, so, the only possibility is that the piece allocated is the one that Alice cut off. If Bob evaluates that piece to be $\ge \frac{k}{d}$ then he is allocated that piece, otherwise Alice is allocated that piece. Then, The ratio in which to divide the unallocated is either $(a*d,b*d-k*(a+b))$ or $(a*d-k*(a+b),b*d)$ depending on whether Bob or Alice got the piece, respectively. \\
\caseTitle{Case 3: $\frac{a}{a+b}<\frac{b}{a+b} <\frac{k}{d} $}
This is symmetric to Case 2 because we have $1-\frac{k}{d}\le \frac{a}{a+b}<\frac{b}{a+b}$ so the only difference from Case 2 is that we consider the remaining piece that was not cut off, instead of the piece that was.

\begin{corr} \label{goingdown}
We get one of the following two relations, depending on our choice of $\frac{k}{d}$ at each operation
\begin{center}
\begin{math}
f(a,b) = 1 + \max\{f(a*d,b*d-k*(a+b)),f(a*d,k*(a+b)-a*d)\}\, \,  \frac{a}{a+b}<\frac{k}{d}<\frac{b}{a+b} \linebreak
f(a,b) = 1+ \max\{f(a*d-k*(a+b),b*d),f(a*d,b*d-k*(a+b))\}\, \,\, \, \frac{k*(a+b)}{d}\le a,b
\end{math}
\end{center} 
\end{corr}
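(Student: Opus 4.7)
The plan is to read the Corollary as a direct repackaging of the case analysis carried out in the preceding Lemma. By that Lemma, we may assume an optimal $(a,b)$-division consists entirely of $ProcA(c)$ and $ProcB(c)$ steps for various $c$; I would fix such an optimal protocol and look only at its first step, with parameter $c = k/d$. Whatever that step is, it produces two possible continuations (one for each possible evaluation by the second player), and hands one of the two pieces to one of the two players; the worst case over these two branches, plus $1$ for the cut itself, is the total cost of the protocol.

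Next, I would split into the three geometric cases identified in the Lemma: Case 1, where $k/d$ lies strictly between $a/(a+b)$ and $b/(a+b)$, and Cases 2 and 3, where $k/d$ lies outside that interval and which are symmetric under $k/d \mapsto 1 - k/d$. The Lemma's discussion already identifies, in each branch, the residual ratio to be divided on the remaining piece: in Case 1 the two possibilities are $(ad,\, bd - k(a+b))$ and $(ad,\, k(a+b) - ad)$, while in Cases 2 and 3 they are $(ad,\, bd - k(a+b))$ and $(ad - k(a+b),\, bd)$. The one genuine computation is verifying that the residual subproblem really is an $(a',b')$-division for precisely these $a', b'$: for each branch, one works out the fraction of each player's remaining valuation that they still need in order to meet the $a/(a+b)$ and $b/(a+b)$ targets, and checks that those two fractions are exactly $a'/(a'+b')$ and $b'/(a'+b')$ after clearing the common factor $a+b$. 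This is straightforward arithmetic but is the most error-prone step, since one has to keep track of who cut the piece, who took the piece, and which piece remains in each branch.

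Finally, since the residual is itself a standard $(a',b')$-division, its optimal worst-case cost is $f(a',b')$ by definition, so a protocol starting with parameter $k/d$ uses exactly $1 + \max\{f(a'_1,b'_1), f(a'_2,b'_2)\}$ cuts in the worst case. This is precisely the identity stated, one equation per geometric case: Case 1 yields the first relation and Cases 2 and 3 collapse into the second relation by the $k/d \leftrightarrow 1-k/d$ symmetry. The main conceptual subtlety to flag is the reading of the equality: the right-hand side depends on $k/d$, so the Corollary should be interpreted as saying that, for the $k/d$ chosen by an optimal protocol at this step, $f(a,b)$ equals the corresponding expression, with the freedom to minimize over $k/d$ being implicit and used in later sections.
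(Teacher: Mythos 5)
Your proposal is correct and follows essentially the same route as the paper: the paper's own proof is a two-sentence remark that each operation costs one cut and leaves one of the two residual ratios already computed in the preceding Lemma, with the $\max$ arising because the non-cutter's evaluation cannot be controlled. Your additional care about verifying the residual ratios and about the implicit dependence of the equality on the choice of $\frac{k}{d}$ is consistent with (and slightly more explicit than) what the paper does.
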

\begin{proof}
Each operation takes a single cut, and leaves you with one of two ratios left to divide, depending on the preferences of the non-cutter. Since there is no control of their preference in the protocol, either ratio could need to be acheived by the protocol.
\end{proof}
\begin{buildup} \label{buildupper}
Let $(a_1,b_1)$ and $(a_2,b_2)$ be two ratios. Take $(a,b)$ a ratio in lowest terms that can reduce to either $(a_1,b_1)$ and $(a_2,b_2)$ as described in Corollary ~\ref{goingdown}. Then $(a,b)$ is one of $(a_1,b_1) *_1 (a_2,b_2)\,$,$\,(a_1,b_1) *_2 (a_2,b_2)\,$, or $(a_1,b_1) *_3 (a_2,b_2)$
\end{buildup}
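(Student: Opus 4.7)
The plan is to apply Corollary~\ref{goingdown} to $(a,b)$ and compare the resulting pair of child ratios to the three binary operators coordinate-by-coordinate. By the Corollary, any single cut at $\frac{k}{d}$ reduces $(a,b)$ either to the pair $(ad,\,bd-k(a+b))$ and $(ad,\,k(a+b)-ad)$ (call this \emph{Shape A}, in which the children share the first coordinate $ad$) or to $(ad-k(a+b),\,bd)$ and $(ad,\,bd-k(a+b))$ (\emph{Shape B}). Since a ratio is unordered, I would also record the mirror of Shape A obtained by swapping coordinates throughout, in which the children share a common second coordinate; call this \emph{Shape $A'$}.

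For each shape, I would write the two children as $g_1(a_1,b_1)$ and $g_2(a_2,b_2)$ with positive integer scale factors $g_1,g_2$ (since $(a_1,b_1)$ and $(a_2,b_2)$ are the lowest-terms reductions), extract the linear relations this gives on $a,b,d,k,g_1,g_2$, and solve for the ratio $b:a$ as a function of $a_1,b_1,a_2,b_2$. In Shape~A, the relations $a_1 g_1 = a_2 g_2 = ad$ and $b_1 g_1 + b_2 g_2 = (b-a)d$ yield $\frac{b}{a} = 1 + \frac{b_1}{a_1} + \frac{b_2}{a_2}$, which is exactly the ratio $a_1 a_2 : a_1 a_2 + a_1 b_2 + a_2 b_1$ produced by $*_2$. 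Shape~$A'$ gives the symmetric identity $\frac{a}{b} = 1 + \frac{a_1}{b_1} + \frac{a_2}{b_2}$, matching $*_3$. In Shape~B, the identities $b_1 g_1 = bd$, $a_2 g_2 = ad$, and $a_2 g_2 - a_1 g_1 = b_1 g_1 - b_2 g_2$ (both equal to $k(a+b)$) combine to give $(a_1+b_1)g_1 = (a_2+b_2)g_2$ and hence $\frac{b}{a} = \frac{(a_2+b_2)\,b_1}{(a_1+b_1)\,a_2}$, the ratio produced by $*_1$. Since $\gcd(a,b)=1$, specifying the rational $b/a$ pins down $(a,b)$ uniquely, so these ratio identities become equalities in lowest terms once the operator outputs are reduced.

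The main obstacle I expect is the bookkeeping of orientations: the two children arrive as an unordered pair of unoriented ratios, so I must be careful about which child I call $(a_1,b_1)$ versus $(a_2,b_2)$, and which way each is oriented, because different choices route the same data into different $*_i$. The substantive content of the lemma is that the shapes $A$, $A'$, $B$ allowed by Corollary~\ref{goingdown} between them realize all the orientation-mod assignments, mapping respectively onto $*_2$, $*_3$, $*_1$, and that no fourth combinatorial possibility appears. Once that matching is pinned down, the linear-algebra identities above are routine.
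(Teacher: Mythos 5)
Your proposal is correct and follows essentially the same route as the paper: it sets up the same linear systems relating the scaled children $g_i(a_i,b_i)$ to the expressions from Corollary~\ref{goingdown} and solves for $b:a$, with your Shapes A, $A'$, B corresponding exactly to the paper's Cases 2, 3, 1 (yielding $*_2$, $*_3$, $*_1$ respectively). The orientation bookkeeping you flag is what the paper handles via its WLOG assumption $b_1 a_2 \ge a_1 b_2$ and the case split on the ordering of $\frac{b_2}{a_2}$, $\frac{b}{a}$, $\frac{b_1}{a_1}$.
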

\begin{proof}
wlog,  take $b_1 a_2 \ge a_1 b_2$.  
Let $\frac{k}{d}$ the piece that is cut off. Then, since neither $\frac{b_2}{a_2}= \frac{b}{a}$ nor $\frac{b}{a} = \frac{b_1}{a_1}$ is not possible by Corollary ~\ref{goingdown}, there are three cases for ordering the ratios: \\
\caseTitle {Case 1 : $\frac{b_2}{a_2}< \frac{b}{a} < \frac{b_1}{a_1}$}
 In this case, we have to be using the second case of Corollary ~\ref{goingdown} in which $k<a$,$k<b$. This means that there are some factors $s_1,s_2$ such that $s_1 * (a_1,b_1) = (a*d-k*(a+b),b*d)$ and $s_2 * (a_2,b_2) = (a*d,b*d-k*(a+b))$ so, we get the system:
\begin{center}
\begin{math}
\begin{array}{lcl} 
s_1 a_1 & = & a*d-k*(a+b) \\
s_1  b_1 & = & b*d \\
s_2  a_2 & = & a*d \\
s_2  b_2 & = &  b*d-k*(a+b)
\end{array}
\end{math}
\end{center}
which gets us that
\begin{center}
\begin{math}
\begin{array}{lcl} 
a &= &(a_1 +b_1)  a_2\\
b &= &(a_2 + b_2)  b_1\\
\end{array}
\end{math}
\end{center}
so, in this case, the only ratio that can depend on $(a_1,b_1)$ and $(a_2,b_2)$ is $( (a_1 +b_1) * a_2, (a_2 + b_2) * b_1) = (a_1,b_1) *_1 (a_2,b_2)$\\
\caseTitle{Case 2: $\frac{b}{a}  < \frac{b_2}{a_2} \le \frac{b_1}{a_1}$}
In this case, we know as well that we must have a cutoff that makes b decrease in both cases, as there is no hope of ending up in either ratio if we decrease a but not b, this means that we are int the case of Corollary ~\ref{goingdown} with $\frac{k}{d} > \frac{a}{a+b}$. Note, we have two choices of $\frac{k}{d}$ symmetric about $\frac{1}{2}$ but, a and b are still unique up to a common scaling factor.
\begin{center}
\begin{math}
\begin{array}{lcl} 
s_1 a_1 &=& a*d\\
s_1  b_1 &=& k*(a+b)-a*d \\
s_2  a_2 &=& a*d \\
s_2  b_2 &=& b*d-k*(a+b)
\end{array}
\end{math}
\end{center}
which gets us a solution that
\begin{center}
\begin{math}
\begin{array}{lcl} 
a &=& a_1 a_2\\
b &=& a_1 a_2 + b_2 a_1 + a_2 b_1\\
\end{array}
\end{math}
\end{center}
So, the only ratio that can depend on $(a_1,b_1)$ and $(a_2,b_2)$ is $(a_1 * a_2,  a_1 * a_2 + b_2 * a_1 + a_2 * b_1)=(a_1,b_1) *_2 (a_2,b_2)$\\
\caseTitle {Case 3:$ \frac{b_2}{a_2} \le \frac{b_1}{a_1}<\frac{b}{a} $}
symmetric to case 2, we get that the only ratio that can depend on $(a_1,b_1)$ and $(a_2,b_2)$ is $(a_1 * b_2 + b_1 * a_2 + b_2 * b_1, b_2 * b_1) = (a_1,b_1) *_3 (a_2,b_2)$
\end{proof}
All ratios are achievable by these three relations, along with the fact that for no cuts, $(0,1)$ and $(1,0)$ are achievable, as they correspond to giving it all to one participant. This also gives a method to, for any $n$, construct all ratios $(a,b)$ that have $f(a,b) \le n$. 

Notice that even if the two smaller ratios are in lowest terms, these products aren't necesarily in lowest terms. However, in showing the next proposition, we use $*_2$ and show that it does produce a ratio in lowest terms in a certain construction. 
\subsection{Bounds on $f(a,b)$}
\begin{construction}
There exists an infinite sequence of ratios $(a_n,b_n)$ in lowest terms with, $\forall n,  2^{2^{n-1}} = a_n +b_n$ and $f(a_n,b_n)\le n$\\*
\end{construction}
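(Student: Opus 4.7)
The plan is to iterate $*_2$ against the swapped ratio. Set $(a_1,b_1)=(1,1)$ and for $n\ge 1$ define $(a_{n+1},b_{n+1})=(a_n,b_n)*_2(b_n,a_n)$, which unfolds to $(a_n b_n,\,a_n^2+a_n b_n+b_n^2)$. The coordinate sum then squares: $a_{n+1}+b_{n+1}=a_n^2+2a_n b_n+b_n^2=(a_n+b_n)^2$, so starting from $a_1+b_1=2=2^{2^0}$ an easy induction gives $a_n+b_n=2^{2^{n-1}}$. The lowest-terms property is also inductive: if $\gcd(a_n,b_n)=1$ and a prime $p$ divides both $a_n b_n$ and $a_n^2+a_n b_n+b_n^2$, then $p$ divides one of $a_n,b_n$, and substituting into the second expression forces $p$ to divide the other as well, contradicting the inductive hypothesis.

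For the cut bound $f(a_n,b_n)\le n$ I would induct on $n$, with base case $f(1,1)\le 1$ by cut-and-choose $ProcA(1/2)$. For the inductive step I want to invoke the converse of Buildup~\ref{buildupper}: since $(a_{n+1},b_{n+1})$ is the $*_2$ product of $(a_n,b_n)$ with $(b_n,a_n)$, there should be a single cut of $(a_{n+1},b_{n+1})$ leaving either $(a_n,b_n)$ or $(b_n,a_n)$ as the remaining subproblem. Concretely, one takes $d=a_n+b_n$ and $k=b_n$ so that $k/d=b_n/(a_n+b_n)$, checks that $a_{n+1}/(a_{n+1}+b_{n+1})<k/d<b_{n+1}/(a_{n+1}+b_{n+1})$ so that we land in the first branch of Corollary~\ref{goingdown}, and verifies that the two sub-ratios it produces simplify, after dividing by $a_n(a_n+b_n)$ and $b_n(a_n+b_n)$ respectively, to $(b_n,a_n)$ and $(a_n,b_n)$. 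Since these two are equivalent as division problems, the inductive hypothesis gives $f(a_{n+1},b_{n+1})\le 1+f(a_n,b_n)\le n+1$.

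The main obstacle is this converse step. Buildup~\ref{buildupper} is stated as a classification of what ratio $(a,b)$ can equal given two target sub-ratios, so I need to run its argument backwards to exhibit the cut $k/d=b_n/(a_n+b_n)$ and then verify by direct algebra both that this cut lies in the open interval $(a_{n+1}/(a_{n+1}+b_{n+1}),\,b_{n+1}/(a_{n+1}+b_{n+1}))$ and that the reductions of the two resulting Corollary~\ref{goingdown} sub-ratios to lowest terms are exactly the claimed pair. The computations are routine polynomial identities in $a_n$ and $b_n$ in which the factor $(a_n+b_n)$ cleanly cancels, but stating them explicitly is the one place the proof does genuine work; everything else is a two-line induction.
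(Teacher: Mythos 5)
Your proposal is correct and follows essentially the same route as the paper: the same iteration $(a_{n+1},b_{n+1})=(a_n,b_n)*_2(b_n,a_n)$, the same squaring of the coordinate sum, and the same prime-divisor argument for lowest terms. The only difference is that you explicitly exhibit the cut $k/d=b_n/(a_n+b_n)$ and verify the resulting sub-ratios, a converse-of-Lemma~\ref{buildupper} step the paper asserts without computation; your added detail is correct (the sub-ratios are $(a_nb_n(a_n+b_n),a_n^2(a_n+b_n))$ and $(a_nb_n(a_n+b_n),b_n^2(a_n+b_n))$, reducing as you claim) and only strengthens the write-up.
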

\begin{proof}
{\bf Base case}: take $n = 1$, then we can achieve $(1,1)$ in a single cut via cut-and-choose. This meets the assumption for $n=1$ that  $a+b = 1+1 = 2^{2^{1-1}} = 2^{2^{n-1}}$\\
{\bf Inductive step}: assume $(a_{n-1},b_{n-1})$ satisfy $2^{2^{n-2}} = a_{n-1} +b_{n-1}$ and are already in lowest terms.
We claim $(a_n,b_n) =  (a_{n-1},b_{n-1}) *_2  (b_{n-1},a_{n-1})$ satisfies the criteria.\\*
We have, $(a_{n-1},b_{n-1}) *_2  (b_{n-1},a_{n-1}) = (a_{n-1}  b_{n-1}, a_{n-1} b_{n-1} + a_{n-1}^2+ b_{n-1}^2)$ First, we show that this is in lowest terms. Assume not, that $1 \ne d =\gcd(a_{n-1} b_{n-1} , a_{n-1} b_{n-1} + a_{n-1}^2+ b_{n-1}^2)$ Then, take a prime $p|d$. Because $d|a_{n-1} b_{n-1}$ and $1 = \gcd(a_{n-1} , b_{n-1})$ either $p|a_{n-1}$ or $p|b_{n-1}$ but not both, wlog, consider, $p|a_{n-1}$. Then, $p |  a_{n-1} b_{n-1} + a_{n-1}^2+ b_{n-1}^2$. So, since $p |  a_{n-1} b_{n-1} + a_{n-1}^2$ we must have that $ p |  b_{n-1}^2$.
but, we have that $p \nmid b_{n-1}$, a contradiction, so, $d=1$, so, $(a_n,b_n)$ is in lowest terms. Now, we just note that $a_n +b_n = a_{n-1} b_{n-1} + a_{n-1} b_{n-1} + a_{n-1}^2+ b_{n-1}^2 = (a_{n-1} + b_{n-1})^2 = (2^{2^{n-2}})^2 = 2^{2^{n-1}}$. Lastly, since $f(a_{n-1},b_{n-1}) \le n-1$ and we are reducing $(a_n,b_n)$ to $(a_{n-1},b_{n-1})$ in a single cut, $f(a_{n},b_{n}) \le n$, completing the induction.
\end{proof}

\begin{corr}
 $\forall M, \exists a,b >M$ such that $\gcd(a,b)=1$ and $f(a,b) \le 1 + \lg(\lg(a+b))$
\end{corr}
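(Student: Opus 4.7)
The plan is to invoke the preceding theorem directly; this corollary is essentially a re-packaging of it.

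First I would apply the theorem to produce, for each $n$, a ratio $(a_n,b_n)$ in lowest terms with $a_n + b_n = 2^{2^{n-1}}$ and $f(a_n,b_n) \le n$. Then I would compute $\lg\lg(a_n+b_n) = \lg(2^{n-1}) = n-1$, so the bound $f(a_n,b_n) \le n$ becomes exactly $f(a_n,b_n) \le 1 + \lg\lg(a_n+b_n)$, matching the statement to be proved.

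The only remaining task is to verify that for every $M$ we can make \emph{both} $a_n$ and $b_n$ exceed $M$, not just their sum. For this I would inspect the recursion $(a_n,b_n) = (a_{n-1}b_{n-1},\, a_{n-1}b_{n-1} + a_{n-1}^2 + b_{n-1}^2)$ used in the inductive step. Since $(a_1,b_1) = (1,1)$ gives $(a_2,b_2) = (1,3)$, $(a_3,b_3) = (3,13)$, and so on, we have $a_n = a_{n-1}b_{n-1}$ with both factors at least $1$ and $b_{n-1} \to \infty$ (as $a_{n-1}+b_{n-1} = 2^{2^{n-2}}$). So $a_n$ is non-decreasing and eventually exceeds any $M$, and $b_n \ge a_n$ follows (or one may just use $b_n = a_n + a_{n-1}^2 + b_{n-1}^2 \ge a_n$). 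Choosing $n$ large enough that $a_n > M$ then produces the desired pair.

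There is really no obstacle beyond the trivial growth check in the last step; the substantive work has already been done in the theorem. The output is therefore a two-line argument: apply the theorem, compute $\lg\lg(2^{2^{n-1}}) = n-1$, and take $n$ large.
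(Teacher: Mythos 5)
Your proposal is correct and follows exactly the route the paper intends: the paper states this corollary without proof as an immediate consequence of the preceding theorem, and your computation $\lg\lg(2^{2^{n-1}}) = n-1$ together with the growth check on $a_n = a_{n-1}b_{n-1}$ is precisely what is needed. The only thing you add beyond the paper is the (correct and worthwhile) verification that both coordinates, not just their sum, eventually exceed $M$.
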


\begin{construction}
For all ratios that can be acheived in n or less steps, (a,b) we have that $2^{2^{n}-1}\ge a+b$\\*
\end{construction}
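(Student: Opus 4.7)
The plan is to induct on $n$, the number of allowed cuts. For the base case $n=0$, no cuts are made, so one participant receives the entire cake; the only achievable ratios are $(1,0)$ and $(0,1)$, and indeed $a+b=1 = 2^{2^{0}-1}$.

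For the inductive step, assume every ratio $(a',b')$ in lowest terms with $f(a',b')\le n-1$ satisfies $a'+b'\le 2^{2^{n-1}-1}$. Let $(a,b)$ in lowest terms satisfy $f(a,b)\le n$. By Corollary \ref{goingdown}, the first cut reduces $(a,b)$ to one of two possible sub-ratios $(a_1,b_1)$ and $(a_2,b_2)$, each with $f$-value at most $n-1$; by Lemma \ref{buildupper}, $(a,b)$ must equal, as a ratio, one of the three products $(a_1,b_1)*_i(a_2,b_2)$ for $i\in\{1,2,3\}$.

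The central calculation is that for each $*_i$, the unreduced coordinate sum can be written in the symmetric form $x_1(a_2+b_2) + x_2(a_1+b_1)$ with $x_1\in\{a_1,b_1\}$ and $x_2\in\{a_2,b_2\}$: for $*_1$ it is $b_1(a_2+b_2) + a_2(a_1+b_1)$; for $*_2$ it is $a_1(a_2+b_2) + a_2(a_1+b_1)$; and for $*_3$ it is $b_1(a_2+b_2) + b_2(a_1+b_1)$. Since $x_i\le a_i+b_i$, each sum is bounded by $2(a_1+b_1)(a_2+b_2)$, and reducing the pair to lowest terms only decreases its coordinate sum. Combined with the inductive hypothesis, this gives
\[
a+b \;\le\; 2(a_1+b_1)(a_2+b_2) \;\le\; 2\cdot 2^{2^{n-1}-1}\cdot 2^{2^{n-1}-1} \;=\; 2^{2^{n}-1},
\]
closing the induction.

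The main obstacle is uniformly verifying the bound $2(a_1+b_1)(a_2+b_2)$ across three syntactically different operations; the symmetric factorization above is what makes the constant land exactly where it needs to — note that any looser constant than $2$ would fail to match $2^{2^n-1}$ after squaring. A secondary subtlety is the gap between the product $(a_1,b_1)*_i(a_2,b_2)$ and its reduction to lowest terms, but since reduction only divides out a common factor, it can only shrink $a+b$, so the bound transfers for free.
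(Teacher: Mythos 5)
Your proof is correct and follows essentially the same route as the paper: induction on the number of cuts, invoking Lemma~\ref{buildupper} to write the ratio as one of the three products and bounding each coordinate sum by $2(a_1+b_1)(a_2+b_2)$. Your symmetric factorization $x_1(a_2+b_2)+x_2(a_1+b_1)$ unifies the paper's three separate case computations, and your explicit remark that reducing to lowest terms only shrinks $a+b$ is a small point the paper leaves implicit, but neither changes the substance of the argument.
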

\begin{proof}
{\bf Base case}: Take $n=1$, then the only acheivable ratios are $(1,1)$,$(0,1)$, and $(1,0)$ which satisfy the inequality.\\*
{\bf Inductive step}: Assume all ratios that can be done in $n$ steps satisfy the inequality, then, we know from Lemma ~\ref{buildupper} that all ratios that can be acheived in $n+1$ steps must be obtained from one of the three ways of combining ratios that can be acheived in $\le n$ steps. Let $(a,b)$ be a ratio with $f(a,b) = n+1$ steps. Then, we know from Lemma ~\ref{buildupper} that this depends on two ratios $(a_1,b_1)$ and $(a_2,b_2)$ with $f(a_1,b_1),f(a_2,b_2) \le n$ in one of three ways:\\
\caseTitle{Case 1:}
$(a,b) = (a_1,b_1) *_1 (a_2,b_2) =((a_1 +b_1) * a_2, (a_2 + b_2) * b_1) $ which has sum $(a_1 +b_1) *a_2 + (a_2 + b_2) * b_1 \le  2 *(a_1 +b_1) * (a_2 + b_2) \le 2*2^{2^{n}-1}*2^{2^{n}-1} = 2^{2^{n+1}-1}$\\*
\caseTitle{Case 2:}
$(a,b) = (a_1,b_1) *_2 (a_2,b_2) =(a_1 * a_2,  a_1 * a_2 + b_2 * a_1 + a_2 * b_1)$
which has sum $a_1 * a_2 + a_1 * a_2 + b_2 * a_1 + a_2 * b_1 \le a_1 * a_2 +(a_1 +b_1) * (a_2 + b_2) \le 2 * (a_1 +b_1) * (a_2 + b_2) \le 2*2^{2^{n}-1}*2^{2^{n}-1} = 2^{2^{n+1}-1}$\\*
\caseTitle{Case 3:}
$(a,b) = (a_1,b_1) *_3 (a_2,b_2) =(a_1 * b_2 + b_1 * a_2 + b_2 * b_1, b_2 * b_1)$ which has sum:
$a_1 * b_2 + b_1 * a_2 + b_2 * b_1 + b_2 * b_1 \le (a_1 +b_1) * (a_2 + b_2) + b_1*b_2 \le 2 *(a_1 +b_1) * (a_2 + b_2) \le 2*2^{2^{n}-1}*2^{2^{n}-1} = 2^{2^{n+1}-1}$
\end{proof}
\begin{corr}
For any ratio $(a,b)$ in lowest terms,  $f(a,b)  \ge \lg(1+\lg(a+b))$
\end{corr}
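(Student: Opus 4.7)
The plan is essentially to invert the inequality given by the preceding theorem (Construction~2). Let $(a,b)$ be a ratio in lowest terms, and set $n := f(a,b)$. Since $(a,b)$ can certainly be achieved in $n$ or fewer steps, Construction~2 applies and yields $a+b \le 2^{2^{n}-1}$.

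From here I would take base-$2$ logarithms twice, using that $\lg$ is monotone. The first application gives $\lg(a+b) \le 2^{n}-1$, which rearranges to $1+\lg(a+b) \le 2^{n}$. The second application gives $\lg(1+\lg(a+b)) \le n = f(a,b)$, which is exactly the desired bound. One should briefly note that $1+\lg(a+b) > 0$ so that the outer logarithm is defined; this is automatic because a ratio in lowest terms has $a+b \ge 1$.

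The hard part is essentially nonexistent here, since all of the combinatorial work has already been packaged into Construction~2 (and, through it, into Lemma~\ref{buildupper} classifying how ratios combine under the operators $*_1, *_2, *_3$). This corollary is just the contrapositive of that theorem, phrased in a more convenient logarithmic form. The only sanity check worth making is that the double exponent $2^{2^{n}-1}$ in the theorem is precisely what makes the double logarithm $\lg(1+\lg(\cdot))$ fall out on the other side, so the two statements are genuinely equivalent up to the algebraic rearrangement above.
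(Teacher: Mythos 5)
Your proposal is correct and matches the paper's intent exactly: the paper states this as an immediate corollary of the preceding theorem ($a+b \le 2^{2^{n}-1}$ for ratios achievable in $n$ steps), and your derivation by setting $n = f(a,b)$ and taking logarithms twice is precisely the intended argument.
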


\bibliographystyle{plain}	
\bibliography{unequalbib}	
\end{document}